\newtheorem{theorem}{Theorem}
\newtheorem{rule-def}[theorem]{Rule}
\begin{document}

\title{An Experimentally Validated Feasible Quantum Protocol for Identity-Based Signature with Application to Secure Email Communication}

\author{Tapaswini Mohanty}
\affiliation{Department of Mathematics, National Institute of Technology Jamshedpur, Jamshedpur-831014, India.}
\email{mtapaswini37@gmail.com}

\author{Vikas Srivastava}
\affiliation{Department of Mathematics, National Institute of Technology Jamshedpur, Jamshedpur-831014, India.}
\email{vikas.math123@gmail.com}

\author{Sumit Kumar Debnath}
\affiliation{Department of Mathematics, National Institute of Technology Jamshedpur, Jamshedpur-831014, India.}
\email{sd.iitkgp@gmail.com}

\author{Debasish Roy }
\affiliation{Department of Mathematics,  IIT Kharagpur, India.}
\email{debasish.roy@maths.iitkgp.ac.in}

\author{Kouichi Sakurai}
\affiliation{Faculty of Information Science and Electrical Engineering, Kyushu University, Fukuoka- 8190395, Japan.}
\email{sakuraicsce2009g@gmail.com}

\author{Sourav Mukhopadhyay}
\affiliation{Department of Mathematics,  IIT Kharagpur, India.}
\email{sourav@maths.iitkgp.ac.in}
	
\maketitle

\begin{abstract}
Digital signatures are one of the simplest cryptographic building blocks that provide appealing security characteristics such as authenticity, unforgeability, and undeniability. In 1984, Shamir developed the first Identity-based signature (IBS) to simplify public key infrastructure and circumvent the need for certificates. It makes the process uncomplicated by enabling users to verify digital signatures using only the identifiers of signers, such as email, phone number, etc.
Nearly all existing IBS protocols rely on several theoretical assumption-based hard problems. Unfortunately, these hard problems are unsafe and pose a hazard in the quantum realm. Thus, designing IBS algorithms that can withstand quantum attacks and ensure long-term security is an important direction for future research.  Quantum cryptography (QC) is one such approach. In this paper, we propose an IBS based on QC. Our scheme's security is based on the laws of quantum mechanics. It thereby achieves long-term security and provides resistance against quantum attacks. We verify the proposed design's correctness and feasibility by simulating it in a prototype quantum device and the IBM Qiskit quantum simulator. The implementation code in qiskit with Jupyternotebook is provided in the Annexure. Moreover, we discuss the application of our design in secure email communication.

\end{abstract}

\keywords{quantum communication; quantum computation; identity based signature; long-term security}

\markboth{Mohanty, Srivastava, Debnath, Roy, Sakurai, Mukhopadhyay et al. }{A Feasible Quantum Protocol for Identity-Based Signature}

\section{Introduction}
The digital signature is the most commonly utilized cryptographic building block in modern communication networks. The signer generates their public and secret keys in a digital signature protocol. They are distinguished from each other by their public key. In a practical scenario, the signer can be distinguished by their names or email addresses instead of randomly generated keys. Public critical infrastructure (PKI) was intended to create a one-to-one correspondence between public keys and a signer's identity. A certificate authority may be necessitated to connect public keys with identities with the help of a digital certificate. However, this method of assigning public keys with the user's identifier has many loopholes. It lacks effectiveness, and it is not resilient. A more efficient substitute framework is needed to make the PKI simpler. Shamir created the identity-based signature technique (IBS), in which a verifier uses public knowledge about the signer's identity to confirm the signature's authenticity \cite{shamir1984identity}. This approach overcomes the requirement of digital certificates. In an IBS, a trustworthy negotiant or a secret key establisher (SKG) uses the identification to generate the signer's secret key from a master secret key exclusively known to SKG. Many IBS schemes have been proposed \cite{ullah2020lightweight, wei2017forward, ramadan2020identity, zhao2019communication, ko2019forward, song2020efficient, wu2020leakage, wang2020efficient, krzywiecki2019identity, sahana2019provable, james2019efficient} in the area of classical cryptography whose security relies on the number theoretic problems \cite{rivest1978method, kravitz1993digital, koblitz1987elliptic} like factorization problem and discrete logarithm problem. However, these schemes won't be safe in the future as Shor's algorithm \cite{shor1999polynomial} can solve some of these hard problems in polynomial time with the help of an efficient quantum computer.

There is a significant problem in preserving vulnerable information or documents like health records and government documents for a long time. These classically encrypted sensitive documents can be lost during transmission in a public channel. A corrupted party may intercept this sensitive data from the communication channel. This theft of classically encrypted information poses a lot of risk. In the future, when large-scale quantum computers are available, he may try to extract information from the encrypted data. Thus, we need an appropriate solution to secure the confidentiality and integrity of user-sensitive data. While post-quantum cryptography can withstand existing classical and quantum algorithms, it is not a long-term security solution. Introducing new effective or advanced classical or quantum algorithms may break the post-quantum algorithms in the future, making them obsolete.
Quantum cryptography (QC), based on the Heisenberg uncertainty principle and the no-cloning theorem, can solve the aforementioned security risks compared to conventional and post-quantum cryptography. Additionally, because quantum protocols are information-theoretically secure and computationally unattainable, they offer security against quantum computers. Therefore, it is necessary to use QC to construct a digital signature protocol, specifically IBS protocols. \\
This paper is sketched as follows: the following section elaborates on our contribution. Section 3 provides a preliminary background, followed by our proposed design of QIBS. Section 5 provides the correctness of our scheme, with Section 6 providing the security analysis and Sections 7 and 8 providing efficiency and performance analysis. In section 9, we provided a toy example implemented in real hardware, the code of which is given in Annexure. 

\section{Our Contribution}
The conventionally used signatures employed a public key infrastructure model that involved the management of digital certificates. Certificate management is a computationally intensive task that makes PKI-based signatures unsuitable for practical purposes. To tackle this problem, an identity-based signature scheme (IBS) is a workable option. However, nearly all existing IBS protocols rely on several theoretical assumption-based hard problems.  Unfortunately, these difficult challenges are insecure and pose a risk in the quantum realm. Given the current circumstances, creating a quantum-secure IBS is a good idea.
Numerous quantum signature scheme designs are state-of-the-art at the moment \cite{zhang2019high,an2019practical,zhang2020practical,yin2016practical}. However, in the context of QC, there are only a few constructions of IBS \cite{chen2018public,xin2019identity,xin2020efficient,xin2020identity}.
The security of \cite{chen2018public} is also dependent on choosing a one-way function as a random one-way permutation oracle.
Herein, we present the design of a quantum IBS. We utilize the quantum analogue of an OTP (one-time pad) to design an information-theoretically secure quantum IBS. Our proposal consists of three phases: (i) Initializing, (ii) Signing, and (iii) Verification. Using a quantum key distribution protocol, the secret key generator (SKG) provides the signer with a secret key corresponding to their identity during the initializing phase. In addition, the verifier and SKG share a secret key. In the Signing phase, the signer signs with his secret key. Afterwards, during the verification stage, the verifier can confirm the signature with SKG's help. If the implicit encryption is theoretically secure, the designed signature scheme quantum IBS will satisfy unforgeability and undeniability. The unforgeability property ensures that no other party can do the signature on behalf of a signer. In contrast, undeniability ensures that a signer can not deny the signature generation if she did the signature. We assert the suggested strategy obtains long-lived security and stays safe from quantum attacks. The communication and computation cost of our intended design are only $10m+2n$ qubits and  $(23m+3n)\delta+ (3m+n)\beta$ respectively for the message of size $m$ qubits, where $\delta$, $\beta$ and $n$ represent the costs of one straight forward measurement, cost of conversion of a classical bit to a qubit, and bit length of $\phi$ respectively.

Chen et al. \cite{chen2018public} suggested the first quantum IBS using the quantum one-way function, encryption based on identity, and classical security token. However, there is a major drawback in the scheme of Chen et al. \cite{chen2018public}. The design in \cite{chen2018public} involves long-term quantum memory and multiple rounds of quantum swap tests, making it inefficient. Furthermore, unlike our approach, the scheme of \cite{chen2018public} is not impervious to SKG's forgery attack. Our proposed design allows anyone to verify the signature, unlike \cite{chen2018public}. 
In contrast to ideas \cite{chen2018public,xin2019identity,xin2020efficient,xin2020identity}, the suggested approach doesn't use intricate oracle operators. Our scheme achieves less computation cost than the existing quantum IBS schemes \cite{chen2018public,xin2019identity}. The proposed design performs better over \cite{xin2019identity,xin2020identity} from the communication point of view. Since our approach relies only on single photon quantum resources and uses straightforward measurement operators, it is more practical and viable than the schemes of \cite{xin2020efficient,xin2020identity}, which uses Bell states and entangled quantum resources. 
Our design does not use any Oracle one-way function compared to the schemes of \cite{chen2018public,xin2019identity,xin2020efficient,xin2020identity}. Moreover, the suggested scheme supports signature over classical and quantum messages. While the works of \cite{chen2018public,xin2019identity,xin2020efficient,xin2020identity} support only signature over classical message.
A summary of comparable schemes to our suggested design with the existing quantum IBS \cite{chen2018public,xin2019identity,xin2020efficient,xin2020identity} is tabulated in Table \ref{tab:comp}. { We simulated and implemented our scheme in the IBM Qiskit quantum simulator to test the correctness and feasibility of the designed quantum IBS. We implemented our design on a real quantum machine to further validate it. The detailed performance analysis can be found in section \ref{per}. We also investigated that how our proposed scheme can be employed as building block in secure email communication.}

\section{Preliminaries}
\subsection{Quantum One-Time Pad \cite{boykin2003optimal}}\label{qotp}
This part explains the quantum one-time pad (OTP) \cite{boykin2003optimal} procedure between John, the encryptor, and Bob, the decryptor.
There are three algorithms in it: ${\sf KeyGen}$, ${\sf Enc}$ and ${\sf Dec}$ which are described below: \\
In the first step, John and Bob share two randomly chosen secret bits for every qubit. We presume that the sharing of these bits has been done in advance. If the first bit is $0$, then John does nothing. Otherwise, he applies $\sigma_z$ to the qubit. Bob doesn't do anything if the second bit turns out to be $0$, but if the second bit is $1$, then John operates $\sigma_x$. Now John sends the resulting qubit to Bob. The protocol is continued for the rest of the bits. Note that the quantum OTP \cite{boykin2003optimal} is information-theoretically secure.

\begin{description}
	
	\item[$K \leftarrow$ ${\sf KeyGen}$($n$).] On input a positive integer $n$ (length of message),
	${\sf KeyGen}$ outputs a $\kappa(\geq 2n)$ bit key $K$ shared between John and Bob using some quantum key distribution protocol.
	\item[$E_K(|P\rangle) \leftarrow$ ${\sf Enc}$($|P\rangle,K$).] On input the quantum message  $|P\rangle=\otimes_{i=1}^n |p_i\rangle$ with $|p_i\rangle = a_i|0\rangle+b_i|1\rangle$ and $|a_i|^2+|b_i|^2=1$, and the key $K$, John encrypts the message $|P\rangle$ in the following way:
	\begin{enumerate}
		
		\item Computes $\otimes_{i=1}^{n}Z^{K_{2i-1}} |p_i\rangle$ i.e., if the $(2i-1)$-th bit of $K$ is $1$ then operates the unitary operator $Z$ on the $i$-th qubit of the message  $|P\rangle$; otherwise does nothing, for $i=1,\ldots, n$.
		\item Executes $E_K(|P\rangle)=\otimes_{i=1}^{n} X^{K_{2i}}Z^{K_{2i-1}} |p_i\rangle$ by operating $X$ on the $i$-th qubit of $\otimes_{i=1}^{n}Z^{K_{2i-1}} |p_i\rangle$ if the $(2i)$-th bit of $K$ is $1$ for $i=1,\ldots, n$. 	
	\end{enumerate}
	\item[$|P\rangle \leftarrow$ ${\sf Dec}$($E_K(|P\rangle)$).] Bob decrypts $E_K(|P\rangle)$ by performing the following steps:
	\begin{enumerate}
		\item To get $\otimes_{i=1}^{n}Z^{K_{2i-1}} |p_i\rangle$, operates the unitary operator $X$ on the $i$-th qubit of  $E_K(|P\rangle)=\otimes_{i=1}^{n} X^{K_{2i}}Z^{K_{2i-1}} |p_i\rangle,$ if the $(2i)$-th bit of $K$ is $1$, else does nothing, for $i=1,\ldots, n$. 	
		\item Computes $|P\rangle$ by operating the unitary operator $Z$ on the $i$-th qubit of the $\otimes_{i=1}^{n}Z^{K_{2i-1}} |p_i\rangle$ if the $(2i-1)$-th bit of $K$ is $1$, else does nothing.
	\end{enumerate}
\end{description}
Here $X$ and $Z$ are the unitary operators with matrix representation $\begin{pmatrix}
0 & 1\\ 1 &0
\end{pmatrix}$ and $\begin{pmatrix}
1 & 0\\ 0&-1
\end{pmatrix}$ respectively. Note that the quantum OTP  is information-theoretically secure \cite{boykin2003optimal}.

\subsection{U-Gate \label{ugate}}
The most versatile single-qubit quantum gate is the $U$ gate. It is a parameterized gate with matrix representation given by:
$$U(\theta,\phi,\lambda)=\begin{pmatrix}
\cos(\frac{\theta}{2}) & -e^{i\lambda}\sin(\frac{\theta}{2})\\ e^{i\phi}\sin(\frac{\theta}{2}) & e^{i(\phi+\lambda)}\cos(\frac{\theta}{2})
\end{pmatrix}$$
We take the special case where $\theta=\frac{\pi}{2}$ and $ \lambda=0$. Thus, we have $U(\frac{\pi}{2},\phi, 0)=\frac{1}{\sqrt{2}}\begin{pmatrix}
1 & -1\\ e^{i\phi} & e^{i\phi}
\end{pmatrix}$.

\section{Proposed Quantum Identity-Based Signature}

This section outlines our suggested quantum IBS scheme.

\noindent {\bf A high-level overview:} Our scheme consists of three steps: (i) Initializing phase, (ii) Signing phase, and (iii) Verification phase. Let $Alice_i$ be a signer with identity $ID_i$, and Bob be a verifier.
In the Initializing phase, each signer and a verifier share respective secret keys with the SKG with the help of some quantum key distribution (QKD). During the Signing phase, to sign a quantum message, a signer employs the quantum OTP  (mentioned in Section \ref{qotp}) as a building block. On receiving a pair of message signatures from the signer, the verifier, Bob, communicates with SKG to verify the cogency of the message signature couplet during the verification phase.

\begin{description}
	\item\textbf{Initializing Phase:}
	Let $ID_i=(id_i^1, \dots ,id_i^m)\in \{0,1\}^m $ be the identity of an user $Alice_i$.
	\begin{enumerate}
		
		\item Firstly, making use of a quantum key distribution (QKD) protocol SKG shares a secret key $T_i$ (of size $\geq 2m$) with $Alice_i$ having identity $ID_i$ for the quantum OTP mentioned in Section \ref{qotp}.
		\item In a similar manner, Bob shares a secret key $T_u$ (of size $\geq 2m$)  with SKG for the quantum OTP mentioned in Section \ref{qotp}.
		\item Using $T_i$, Alice shares a secret value $\phi\in (0,2\pi)$ with SKG utilizing quantum authentication process \cite{xin2015quantum}.
		
	\end{enumerate}
	See Figure \ref{initia} for a communication flow in the Initializing phase.
	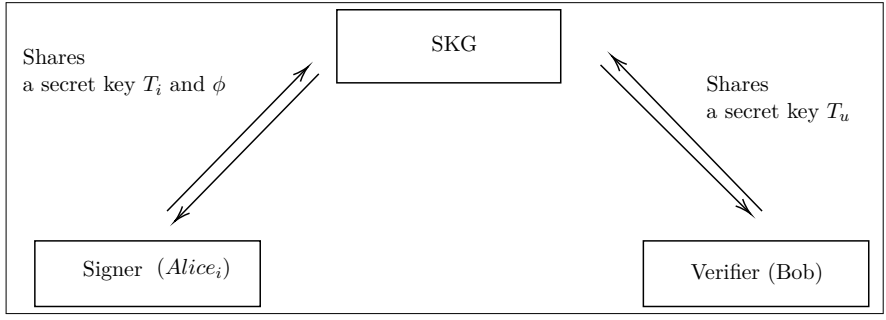
\begin{figure*}[t]
		\centering
		\scalebox{0.75}{
			
			\tikzset{every picture/.style={line width=0.75pt}} 
			\fbox{
				\begin{tikzpicture}[x=0.75pt,y=0.75pt,yscale=-1,xscale=1]
				
				\draw   (249.5,51) -- (398.5,51) -- (398.5,100) -- (249.5,100) -- cycle ;
				\draw   (48.5,206) -- (198.5,206) -- (198.5,250) -- (48.5,250) -- cycle ;
				\draw   (453.5,206) -- (603.5,206) -- (603.5,250) -- (453.5,250) -- cycle ;
				\draw    (237.5,94) -- (142.89,191.56) ;
				\draw [shift={(141.5,193)}, rotate = 314.12] [color={rgb, 255:red, 0; green, 0; blue, 0 }  ][line width=0.75]    (10.93,-3.29) .. controls (6.95,-1.4) and (3.31,-0.3) .. (0,0) .. controls (3.31,0.3) and (6.95,1.4) .. (10.93,3.29)   ;
				\draw    (136.5,186) -- (229.11,90.44) ;
				\draw [shift={(230.5,89)}, rotate = 134.1] [color={rgb, 255:red, 0; green, 0; blue, 0 }  ][line width=0.75]    (10.93,-3.29) .. controls (6.95,-1.4) and (3.31,-0.3) .. (0,0) .. controls (3.31,0.3) and (6.95,1.4) .. (10.93,3.29)   ;
				\draw    (425,88) -- (523.59,186.59) ;
				\draw [shift={(525,188)}, rotate = 225] [color={rgb, 255:red, 0; green, 0; blue, 0 }  ][line width=0.75]    (10.93,-3.29) .. controls (6.95,-1.4) and (3.31,-0.3) .. (0,0) .. controls (3.31,0.3) and (6.95,1.4) .. (10.93,3.29)   ;
				\draw    (532.5,186) -- (433.89,83.44) ;
				\draw [shift={(432.5,82)}, rotate = 46.12] [color={rgb, 255:red, 0; green, 0; blue, 0 }  ][line width=0.75]    (10.93,-3.29) .. controls (6.95,-1.4) and (3.31,-0.3) .. (0,0) .. controls (3.31,0.3) and (6.95,1.4) .. (10.93,3.29)   ;
				
				\draw (311,67) node [anchor=north west][inner sep=0.75pt]   [align=left] {SKG};
				\draw (128,216) node [anchor=north west][inner sep=0.75pt]   [align=left] {($\displaystyle Alice_{i})$};
				\draw (79,218) node [anchor=north west][inner sep=0.75pt]   [align=left] {Signer};
				\draw (483.5,219) node [anchor=north west][inner sep=0.75pt]   [align=left] {Verifier (Bob)};
				\draw (39,76) node [anchor=north west][inner sep=0.75pt]   [align=left] {Shares \\a secret key $\displaystyle T_{i}$ and $\displaystyle \phi $};
				\draw (494,94) node [anchor=north west][inner sep=0.75pt]   [align=left] {Shares \\a secret key $\displaystyle T_{u}$};

				\end{tikzpicture}
			}
			
		}
		\caption{Communication flow in Initializing phase}\label{initia}
	\end{figure*}

	\item\textbf{Signing Phase:}
	Let $|P\rangle=\otimes_{j=1}^{m} |p_j\rangle$ be the quantum message of $m$-qubits  that needs to be signed, where $|p_j\rangle = a_j|0\rangle + b_j|1\rangle$ and $|a_j|^2 + |b_j|^2=1$.  $Alice_i$ prepares two copies of $|P\rangle$.
	Given the quantum message  $|P\rangle$, the signer $Alice_i$ generates the signature by performing the following steps:
	
	\begin{enumerate}
		
		\item Operates $U^{\otimes m }$ (where $U=U(\pi/2,\phi,0)$) on the quantum message state $|P\rangle$ to get $U^{\otimes m}(|P\rangle)$.
		
		\item Applies the quantum OTP (as mentioned in Section \ref{qotp}) on the $U^{\otimes m}(|P\rangle)$ to get $| S\rangle =E_{T_i}(U^{\otimes m}(|P\rangle))$.
		
		\item Encodes $ID_i=(id_i^{1}, \dots, id_i^{m})$ into $|ID_i\rangle= \otimes_{j=1}^m |id_{i}^{j}\rangle$.
		
	\end{enumerate}
	Finally, $Alice_i$ outputs the tuple $(| P\rangle, |S\rangle, |ID_i\rangle)$ as message-signature pair. The stages of the Signing phase are depicted in Figure \ref{sign}.

	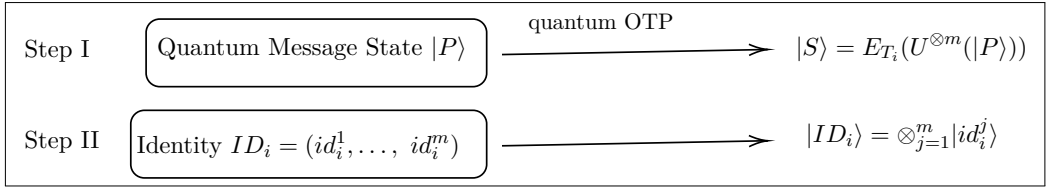
\begin{figure*}[t]
		\centering
		\scalebox{0.85}{
			
			\tikzset{every picture/.style={line width=0.75pt}} 
			\fbox{	
				\begin{tikzpicture}[x=0.75pt,y=0.75pt,yscale=-1,xscale=1]
				
				\draw   (129,129) .. controls (129,124.58) and (132.58,121) .. (137,121) -- (321,121) .. controls (325.42,121) and (329,124.58) .. (329,129) -- (329,153) .. controls (329,157.42) and (325.42,161) .. (321,161) -- (137,161) .. controls (132.58,161) and (129,157.42) .. (129,153) -- cycle ;
				\draw    (338,142) -- (492,139.04) ;
				\draw [shift={(494,139)}, rotate = 178.9] [color={rgb, 255:red, 0; green, 0; blue, 0 }  ][line width=0.75]    (10.93,-3.29) .. controls (6.95,-1.4) and (3.31,-0.3) .. (0,0) .. controls (3.31,0.3) and (6.95,1.4) .. (10.93,3.29)   ;
				\draw   (120,183) .. controls (120,178.58) and (123.58,175) .. (128,175) -- (321,175) .. controls (325.42,175) and (329,178.58) .. (329,183) -- (329,207) .. controls (329,211.42) and (325.42,215) .. (321,215) -- (128,215) .. controls (123.58,215) and (120,211.42) .. (120,207) -- cycle ;
				\draw    (338,196) -- (492,193.04) ;
				\draw [shift={(494,193)}, rotate = 178.9] [color={rgb, 255:red, 0; green, 0; blue, 0 }  ][line width=0.75]    (10.93,-3.29) .. controls (6.95,-1.4) and (3.31,-0.3) .. (0,0) .. controls (3.31,0.3) and (6.95,1.4) .. (10.93,3.29)   ;
				
				\draw (352,116) node [anchor=north west][inner sep=0.75pt]   [align=left] {{\small quantum OTP}};
				\draw (228,140) node [anchor=north west][inner sep=0.75pt]   [align=left] {};
				\draw (134,130) node [anchor=north west][inner sep=0.75pt]   [align=left] {Quantum Message State $|\displaystyle P\rangle $};
				\draw (509,128) node [anchor=north west][inner sep=0.75pt]   [align=left] {$\displaystyle |S\rangle =E_{T_{i}}( U^{\otimes m}(|P\rangle) )$};
				\draw (56,132) node [anchor=north west][inner sep=0.75pt]   [align=left] {Step I};
				\draw (336,170) node [anchor=north west][inner sep=0.75pt]   [align=left] {};
				\draw (228,194) node [anchor=north west][inner sep=0.75pt]   [align=left] {};
				\draw (122,186) node [anchor=north west][inner sep=0.75pt]   [align=left] {Identity $\displaystyle ID_{i} =( id_{i}^{1} ,\dotsc ,\ id_{i}^{m}$)};
				\draw (56,186) node [anchor=north west][inner sep=0.75pt]   [align=left] {Step II};
				\draw (515,178) node [anchor=north west][inner sep=0.75pt]   [align=left] {$|\displaystyle ID_{i}$$\displaystyle \rangle =\otimes _{j=1}^{m}$$\displaystyle |id_{i}^{j}$$\displaystyle \rangle $};

				\end{tikzpicture}
			}

		}
		\caption{Stages of Signing phase}\label{sign}
	\end{figure*}

	\item\textbf{Verification Phase:} On receiving a message-signature pair $(| P\rangle, |S\rangle, |ID_i\rangle)$ from $Alice_i$, Bob communicates with SKG to examine the credibility of the message-signature couplet as in the next paragraph:
	\begin{enumerate}
		
		\item Bob keeps  $| P\rangle$   and encrypts $(|S\rangle, |ID_i\rangle)$ by using the quantum OTP (as mentioned in Section \ref{qotp}) with the help of the secret key $T_u$ to get $E_{T_u}(|S\rangle, |ID_i\rangle)$. Finally, he sends $E_{T_u}(|S\rangle, |ID_i\rangle)$ to SKG.

		\item SKG first decrypts $E_{T_u}(|S\rangle, |ID_i\rangle)$ using the secret key $T_u$ to get back $(|S\rangle, |ID_i\rangle)$. He then retrieves $|P\rangle$  in two steps. In the first step, SKG performs the decryption operation on $|S\rangle$ with the help of the secret key $T_i$, which is associated with the user $Alice_i$ with identity $|ID_i\rangle$. Consequently, SKG gets $U^{\otimes m}(|P\rangle)$. Note that SKG knows secret value $\phi$. In the second step, SKG operates $(U^\dagger)^{\otimes m}$ on $U^{\otimes m}(|P\rangle)$ to retrieve $|P\rangle$. In the following, SKG encrypts $|P\rangle$ using $T_u$  and sends $|R\rangle =E_{T_u}(|P\rangle)$ to Bob.

		\item On receiving $|R\rangle$ from SKG, Bob decrypts it using the secret key $T_u$. Then he verifies whether $D_{T_u}(|R\rangle)$ is same as the $|P\rangle$ of the signature tuple $(| P\rangle, |S\rangle, |ID_i\rangle)$ . If so, then Bob accepts the signature as valid; otherwise, he rejects the signature.
		
	\end{enumerate}
\end{description}
We refer to Figure \ref{veri} for the communication flow in the Verification phase.

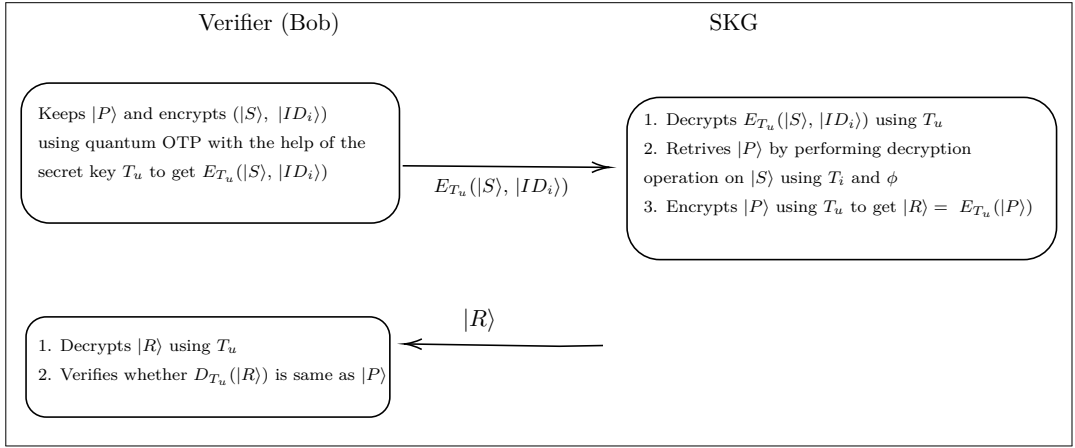
\begin{figure*}[t]
	\centering
	\scalebox{0.8}{

		\tikzset{every picture/.style={line width=0.75pt}} 
		\fbox{	
			\begin{tikzpicture}[x=0.75pt,y=0.75pt,yscale=-1,xscale=1]
			
			\draw   (4,226.6) .. controls (4,216.88) and (11.88,209) .. (21.6,209) -- (222.4,209) .. controls (232.12,209) and (240,216.88) .. (240,226.6) -- (240,279.4) .. controls (240,289.12) and (232.12,297) .. (222.4,297) -- (21.6,297) .. controls (11.88,297) and (4,289.12) .. (4,279.4) -- cycle ;
			\draw    (242,261) -- (369,261.98) ;
			\draw [shift={(371,262)}, rotate = 180.44] [color={rgb, 255:red, 0; green, 0; blue, 0 }  ][line width=0.75]    (10.93,-3.29) .. controls (6.95,-1.4) and (3.31,-0.3) .. (0,0) .. controls (3.31,0.3) and (6.95,1.4) .. (10.93,3.29)   ;
			\draw   (382,238.4) .. controls (382,227.13) and (391.13,218) .. (402.4,218) -- (629.6,218) .. controls (640.87,218) and (650,227.13) .. (650,238.4) -- (650,299.6) .. controls (650,310.87) and (640.87,320) .. (629.6,320) -- (402.4,320) .. controls (391.13,320) and (382,310.87) .. (382,299.6) -- cycle ;
			\draw   (7,368.6) .. controls (7,361.64) and (12.64,356) .. (19.6,356) -- (221.4,356) .. controls (228.36,356) and (234,361.64) .. (234,368.6) -- (234,406.4) .. controls (234,413.36) and (228.36,419) .. (221.4,419) -- (19.6,419) .. controls (12.64,419) and (7,413.36) .. (7,406.4) -- cycle ;
			\draw    (367,374) -- (339.85,374.48) -- (245,373.03) ;
			\draw [shift={(243,373)}, rotate = 0.88] [color={rgb, 255:red, 0; green, 0; blue, 0 }  ][line width=0.75]    (10.93,-3.29) .. controls (6.95,-1.4) and (3.31,-0.3) .. (0,0) .. controls (3.31,0.3) and (6.95,1.4) .. (10.93,3.29)   ;
			
			\draw (113.14,163.22) node [anchor=north west][inner sep=0.75pt]   [align=center] { Verifier (Bob)};
			\draw (432,164) node [anchor=north west][inner sep=0.75pt]   [align=left] {SKG};
			\draw (13,222) node [anchor=north west][inner sep=0.75pt]   [align=left] {{\scriptsize Keeps $\displaystyle |P\rangle $ and encrypts $\displaystyle ( |S\rangle ,\ |ID_{i} \rangle )$}\\{\scriptsize using quantum OTP with the help of the}\\{\scriptsize  secret key $\displaystyle T_{u}$ to get $ $$\displaystyle E_{T_{u}}$($\displaystyle |S\rangle $, $\displaystyle |ID_{i}$$\displaystyle \rangle $$\displaystyle )$}\\{\scriptsize  }};
			\draw (260,267) node [anchor=north west][inner sep=0.75pt]   [align=left] {{\footnotesize $ $$\displaystyle E_{T_{u}}$($\displaystyle |S\rangle $, $\displaystyle |ID_{i}$$\displaystyle \rangle $$\displaystyle )$}};
			\draw (269,237) node [anchor=north west][inner sep=0.75pt]   [align=left] {};
			\draw (391,226) node [anchor=north west][inner sep=0.75pt]   [align=left] {{\scriptsize 1. Decrypts $ $$\displaystyle E_{T_{u}}$($\displaystyle |S\rangle $, $\displaystyle |ID_{i}$$\displaystyle \rangle $$\displaystyle )$ using $\displaystyle T_{u}$}\\{\scriptsize 2. Retrives $\displaystyle |P\rangle \ $by performing decryption }\\{\scriptsize operation on $\displaystyle |S\rangle $ using $ $$\displaystyle T_{i}$ and $\phi$}\\{\scriptsize 3. Encrypts $\displaystyle |P\rangle $ using $\displaystyle T_{u}$ to get $\displaystyle |R\rangle =\ E_{T_{u}}( |P\rangle )$}};
			\draw (13,368) node [anchor=north west][inner sep=0.75pt]   [align=left] {{\scriptsize 1. Decrypts $\displaystyle |R\rangle $ using $\displaystyle T_{u}$}\\{\scriptsize 2. Verifies whether $\displaystyle D_{T_{u}}( |R\rangle) $ is same as $\displaystyle |P\rangle $}\\\\};
			\draw (278,349) node [anchor=north west][inner sep=0.75pt]   [align=left] { $\displaystyle |R\rangle $ };

			\end{tikzpicture}
			
		}
	}
	\caption{Communication flow in Verification phase}\label{veri}
\end{figure*}

\section{Correctness}

We can easily check the correctness of our scheme in the following way: firstly, SKG can obtain the identity of the signer on receiving $E_{T_u}(|S\rangle, |ID_i\rangle)$ from Bob and using $T_u$ he can decrypt $E_{T_u}(|S\rangle, |ID_i\rangle)$ to get  $(|S\rangle, |ID_i\rangle)$. Now corresponding to the identity $|ID_i\rangle$, SKG will consider $T_i$ and $\phi$ for decrypting $|S\rangle$   to get $|P\rangle$. In the following, SKG encrypts $|P\rangle$ using $T_u$  and sends $|R\rangle =E_{T_u}(|P\rangle)$ to Bob. Then Bob verifies whether $|P\rangle =D_{T_u}(|R\rangle)$ to judge validity of signature. From the above analysis, it is easy to see that the correctness holds in the proposed scheme.

\section{Security Analysis}

\begin{theorem}
	
	If the underlying quantum OTP is information-theoretically secure, the proposed signature scheme quantum IBS satisfies unforgeability and undeniability.
	
\end{theorem}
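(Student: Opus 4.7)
The plan is to reduce both properties to the information-theoretic security of the underlying quantum OTP as assumed in the hypothesis. I would first fix notation: on a message $|P\rangle$, the valid signature has the form $|S\rangle = E_{T_i}(U^{\otimes m}(|P\rangle))$ where $T_i$ is known only to $Alice_i$ and SKG, and $U = U(\pi/2,\phi,0)$ depends on the secret angle $\phi$ likewise shared only between $Alice_i$ and SKG. The verifier Bob only ever sees $|P\rangle$ together with an OTP encryption of $(|S\rangle,|ID_i\rangle)$ under $T_u$, and finally $|R\rangle = E_{T_u}(|P\rangle)$ from SKG. Thus everything that travels on the channel is either a plaintext $|P\rangle$ (which is the message itself and carries no secret) or a ciphertext under an OTP whose key is shared by exactly two honest parties.

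For \textbf{unforgeability}, I would proceed by a case analysis on the potential forger $\mathcal{A}\neq Alice_i$. Case (i): an outside eavesdropper who has never held any of $T_i$, $T_u$, or $\phi$. To get SKG to accept, $\mathcal{A}$ must submit $(|P^\ast\rangle,|S^\ast\rangle,|ID_i\rangle)$ such that $D_{T_i}(|S^\ast\rangle) = U^{\otimes m}(|P^\ast\rangle)$. Producing such an $|S^\ast\rangle$ without $T_i$ reduces directly to breaking the quantum OTP on an adversarially chosen plaintext, contradicting the hypothesis; the best blind guess succeeds with probability at most $2^{-2m}$ because $|T_i|\geq 2m$. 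Even in the unlikely event that the OTP key is guessed, the adversary still has to match the hidden $\phi\in(0,2\pi)$, since $(U^\dagger)^{\otimes m}$ applied with the wrong angle will (with overwhelming probability) not return $|P^\ast\rangle$. Case (ii): another registered signer $Alice_j$, $j\neq i$, who holds her own $(T_j,\phi_j)$ but not $(T_i,\phi)$; her situation is no better than the outside adversary with respect to $T_i$, so the same OTP-security reduction applies. Case (iii): a dishonest verifier Bob, who holds $T_u$ but neither $T_i$ nor $\phi$; $T_u$ is useless for forging $|S^\ast\rangle$ under $T_i$, and again the reduction to OTP security goes through.

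For \textbf{undeniability}, I would argue that if SKG's decryption pipeline $(U^\dagger)^{\otimes m}\circ D_{T_i}$ applied to $|S\rangle$ returns exactly the declared $|P\rangle$, then by the unforgeability argument the only party who could have produced $|S\rangle$ consistent with $(T_i,\phi)$ is $Alice_i$ herself (SKG being the trusted setup authority by assumption). Hence $Alice_i$ cannot later repudiate the pair $(|P\rangle,|S\rangle,|ID_i\rangle)$: any denial would either implicate SKG, which is outside the threat model, or would require a successful OTP attack, which contradicts the hypothesis.

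The main obstacle I expect is making the reduction rigorous in the quantum setting rather than the classical EUF-CMA setting. Because signatures are quantum states that cannot be cloned and whose measurement is destructive, one must specify the adversary's query model carefully (whether $\mathcal{A}$ may request polynomially many signature copies, whether she has quantum or classical access to the oracle, and whether $|P\rangle$ is known to her in the clear before forgery). A clean reduction should exhibit a distinguisher or key-recovery attacker against the quantum OTP whenever a forger exists, respecting the no-cloning theorem throughout. A secondary subtlety is the treatment of SKG: the theorem as stated implicitly assumes SKG is honest, and the proof should make that trust assumption explicit, since the scheme as described does not protect against a malicious SKG forging on behalf of any $Alice_i$.
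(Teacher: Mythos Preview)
Your proposal is correct and in fact considerably more thorough than the paper's own argument. The paper's ``proof'' of unforgeability considers only a single scenario---your Case~(ii), where another registered user $Alice_j$ signs with her own $(T_j,\phi')$ and attaches $|ID_i\rangle$---and simply observes that SKG will then decrypt with the wrong key pair, so $|P'\rangle\neq|P\rangle$. It does not treat an outside eavesdropper, a dishonest verifier, or a forger who tries to fabricate $|S^\ast\rangle$ directly rather than re-using her own key; nor does it phrase anything as a reduction to the OTP security assumption in the hypothesis. Your three-case decomposition, together with the explicit reduction to OTP key-recovery and the separate role of the hidden angle $\phi$, is the more careful version of the same underlying idea. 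Likewise, the paper's undeniability paragraph is essentially your argument stripped of the explicit appeal to unforgeability and the explicit SKG-honesty caveat; both arguments conclude that a verifying decryption under $(T_i,\phi)$ pins authorship on $Alice_i$.

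The obstacles you flag---formalizing the reduction in a quantum query model with no-cloning constraints, and making the honest-SKG assumption explicit---are real, and the paper does not address them. So your proposal would not merely reproduce the paper's proof but strictly strengthen it; if you are aiming to match the paper's level of rigor, your Case~(ii) paragraph alone already does so.
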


\begin{description}
	
	\item[\textbf{Unforgeability}]
	
	This property ensures that no other party can do the signature on behalf of a user, say $Alice_i$. Suppose one user $Alice_j$ is different from $Alice_i$ and wants to forge a signature on behalf of the signer $Alice_i$. $Alice_i$'s identity is $|ID_i\rangle$ and $Alice_j$'s identity is $|ID_j\rangle$. $Alice_j$ is having $T_j$ as the shared key and $\phi^{'}$ as the shared secret value between him and SKG.  He has to use $T_j$ and $\phi^{'}$ to encrypt $|P\rangle$ which yields $| S'\rangle =E_{T_j}((U(\pi/2,\phi^{'},0))^{\otimes m}| P\rangle)$. In the following, $Alice_j$ may send $(|P\rangle,|S'\rangle, |ID_i\rangle)$ to Bob who in turn sends $E_{T_u}(|S'\rangle, |ID_i\rangle)$ to SKG. As $T_u$ is known to SKG, he will decrypt $E_{T_u}(|S'\rangle, |ID_i\rangle)$. Since the decryption yields $(|S'\rangle, |ID_i\rangle)$, the SKG uses $T_i$ associated to $|ID_i\rangle$ and $\phi^{'}$ for decrypting $|S'\rangle$. Now the decryption of $|S'\rangle$ using $T_i$ and $\phi$ will produce a value $|P'\rangle$ which is not equal to $|P\rangle$ since $|S'\rangle$ was encrypted  using $T_j$ and $\phi^{'}$.
	Thus, no one can sign on behalf of $Alice_i$, i.e., the proposed scheme achieves unforgeability property.
	
	\item[\textbf{Undeniability}]
	
	This property ensures that a user, say $Alice_i$, can not deny the signature generation if she did the signature. After the signing phase, Bob receives  $(|P\rangle,|S\rangle, |ID_i\rangle)$ as message-signature pair from $Alice_i$. In the following, Bob shares $E_{T_u}(|S\rangle, |ID_i\rangle)$ with SKG who in turn decrypts it and retrieves $(|S\rangle, |ID_i\rangle)$. The SKG then uses $T_i$ corresponding to $|ID_i\rangle$ of $Alice_i$ to decrypt $|S\rangle$. This yields $|P\rangle$, which is sent to Bob for verification. Thus, $Alice_i$ can not deny the generation of the signature if she does this.

\end{description}
\begin{theorem}
The proposed quantum IBS is secure against communicative Pauli operator attack \cite{bibid}.
\end{theorem}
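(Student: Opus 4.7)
The plan is to identify the three communication channels in which an adversary Eve can insert Pauli operators: (i) the signer-to-verifier channel carrying the triple $(|P\rangle, |S\rangle, |ID_i\rangle)$, (ii) the Bob-to-SKG channel carrying $E_{T_u}(|S\rangle, |ID_i\rangle)$, and (iii) the SKG-to-Bob channel carrying $|R\rangle = E_{T_u}(|P\rangle)$. I would model the attack as an arbitrary tensor product of single-qubit Paulis $\sigma \in \{I, X, Y, Z\}$ applied qubit-wise on any combination of these channels, and establish that the verification step rejects whenever the attack is non-trivial.

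The central technical observation I would rely on is that the Pauli group is closed under conjugation by any OTP mask: for every single-qubit Pauli $P$ and every mask $X^{a} Z^{b}$ used in the quantum OTP of Section \ref{qotp}, one has $(X^{a} Z^{b})^\dagger\, P\, (X^{a} Z^{b}) = \pm P$. Consequently, any Pauli perturbation inserted by Eve on channel (ii) or (iii), after the legitimate OTP decryption by SKG or Bob, appears as the same Pauli (up to a physically irrelevant sign) acting directly on the underlying plaintext qubit. For channel (iii), Bob's final comparison of $D_{T_u}(|R\rangle)$ with $|P\rangle$ therefore detects any non-identity perturbation on a generic message state. For channel (ii), the perturbation is pushed onto $(|S\rangle, |ID_i\rangle)$ at SKG's end, which reduces the analysis to an attack on channel (i).

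The subtle case is a coherent attack on channel (i) in which Eve simultaneously applies $P_{\mathrm{msg}}$ to the plaintext copy of $|P\rangle$ and $P_{\mathrm{sig}}$ to $|S\rangle$, hoping the two perturbations cancel at verification. Tracking the state through the scheme, SKG's OTP decryption of the tampered signature with key $T_i$ yields $\pm P_{\mathrm{sig}}\, U^{\otimes m}|P\rangle$, and applying $(U^\dagger)^{\otimes m}$ produces $\pm (U^\dagger)^{\otimes m} P_{\mathrm{sig}} U^{\otimes m} |P\rangle$, which is returned under $T_u$-encryption and is compared by Bob against the tampered copy $P_{\mathrm{msg}}|P\rangle$. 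For undetected forgery on an arbitrary message, Eve would need the operator identity $(U^\dagger)^{\otimes m} P_{\mathrm{sig}} U^{\otimes m} \equiv P_{\mathrm{msg}}$ up to global phase. Since $U = U(\pi/2, \phi, 0)$ depends on the secret angle $\phi$ known only to $Alice_i$ and SKG, the conjugation $U^\dagger P_{\mathrm{sig}} U$ is generically a $\phi$-dependent single-qubit unitary that is not even a Pauli matrix; hence no Pauli choice $P_{\mathrm{msg}}$ can satisfy the cancellation condition without knowledge of $\phi$.

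The main obstacle I anticipate is making the cancellation-impossibility argument uniform over the message: one must verify that the set of states $|P\rangle$ on which $P_{\mathrm{msg}}|P\rangle$ accidentally coincides with $(U^\dagger)^{\otimes m} P_{\mathrm{sig}} U^{\otimes m}|P\rangle$ has measure zero, and then invoke the information-theoretic security of the quantum OTP to dismiss more exotic joint attacks that attempt to correlate perturbations across channels (i), (ii), and (iii) using any side information available on the OTP ciphertexts. Once those two ingredients are in place, combining them with the Pauli-group closure property above yields the claimed security against communicative Pauli operator attacks.
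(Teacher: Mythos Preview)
Your proposal is correct and in fact considerably more thorough than the paper's own argument. The paper treats only a single scenario: the adversary applies one non-trivial Pauli $V$ simultaneously to $|P\rangle$ and $|S\rangle$ on the signer-to-verifier link, and disposes of it in a single line by observing that $V|S\rangle \equiv E_{T_i}\bigl(VU^{\otimes m}|P\rangle\bigr) \neq E_{T_i}\bigl(U^{\otimes m}(V|P\rangle)\bigr)$, i.e.\ that $V$ and $U^{\otimes m}$ do not commute. You recover this as the special case $P_{\mathrm{msg}}=P_{\mathrm{sig}}=V$ of your cancellation condition $(U^\dagger)^{\otimes m}P_{\mathrm{sig}}U^{\otimes m}\equiv P_{\mathrm{msg}}$, and you make explicit the reason the paper leaves implicit, namely that conjugation by the secret-$\phi$ gate $U(\pi/2,\phi,0)$ takes a non-trivial Pauli to a $\phi$-dependent unitary that is generically not Pauli.

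Where you genuinely go beyond the paper is in (a) enumerating all three quantum channels and using closure of the Pauli group under OTP conjugation to reduce tampering on the Bob--SKG and SKG--Bob links to the channel-(i) case, and (b) allowing independent Paulis on the message and signature components rather than the same $V$ on both. The paper's proof simply does not address these cases. What your route buys is an actual case analysis that covers the attack model as stated; what the paper's buys is brevity, essentially asserting only the key non-commutation fact and stopping there. The ``main obstacle'' you flag (ruling out accidental state-dependent cancellation and invoking OTP security against correlated multi-channel attacks) is real and is likewise not handled in the paper's one-line proof.
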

\begin{proof}
 Suppose an attacker wants to forge the signature pair $(| P\rangle, |S\rangle, |ID_i\rangle)$ using the Pauli operator, then he applies a non-trivial Pauli operator $V$ (says), then the signature changes to $(V| P\rangle, V|S\rangle, |ID_i\rangle)$. But $$V|S\rangle\equiv E_{T_u}VU(| P\rangle)\\
                  \neq E_{T_u}U (V(| P\rangle))$$.
Thus, this signature pair is invalid. Therefore, the attacker fails to forge $(| P\rangle, |S\rangle, |ID_i\rangle)$ using the Pauli operator.
\end{proof}

\section{Efficiency Analysis}
We now present the efficaciousness of our proposed blueprint. We first discuss the cost of communication and calculation.
\begin{description}
	
	\item{\bf Communication cost:} $4m$ qubits must be transmitted among the signer, verifier and secret key generator during the Initializing phase. During the Signing and Verification phases, $6m$ more qubits are needed to communicate. If $\phi$ is substituted by a bit string of length $n$, then $2n$ qubits are communicated in the quantum authentication process. So, our proposed design's total quantum communication cost is $10m+2n$ qubits.
	\item{\bf Computation cost:} The entire computational expense of our suggested layout is $(23m+3n)\delta+ (3m+n)\beta$  for the message of size $m$ qubits, where $\delta$ and $\beta$ represent the expenses of converting a conventional bit to a qubit and of making a single, basic measurement, respectively. In particular, $4m+3n$ simple measurements must be performed during the initialising phase. Hence, the cost accrued during the Initializing phase is $(4m+3n)\delta$. In addition, the total cost incurred during encryption and decryption is $19m\delta$. Furthermore, the cost to encode classical bits into quantum bits is $(3m+n)\beta$.
	
\end{description}

We now present the comparative analysis of our proposed design with existing quantum IBS \cite{chen2018public,xin2019identity,xin2020efficient,xin2020identity} in Table \ref{tab:comp}.
\begin{table*}[h!]
	\centering
	\caption{ Comparison with existing quantum IBS}
	\label{tab:comp}
	\scalebox{0.59}{
		
		\begin{tabular}{|p{2.5cm}|p{1.5cm}|l|p{1cm}|p{1.7cm}|p{2cm}|p{3.6cm}|p{3.5cm}|p{3.5cm}|p{1.5cm}|}
			\hline
			& Signature Space & \begin{tabular}[c]{@{}l@{}}Message\\ space\end{tabular} & Key Space & Dimension of Hilbert Space&  Oracle Used & Quantum Communication Cost &  Quantum Computation Cost & Classical Computation Cost & Using Bell States\\ \hline
			Chen et al. \cite{chen2018public} & Q               & C                                                         & C    &2     & Yes                     &  $5m+L_{2}uw$ qubits & $14m\delta+uw \gamma_q +w\gamma_q+uw \delta +w\delta$ & $m\gamma+3mx_2$ & No \\ \hline
			Xin et al. \cite{xin2019identity} & Q               & C                                                         & C     &2    & Yes                     & $11m+8l$ qubits     & $21m\delta+3m\beta+8l(\beta+\delta)$ & $11mx_3+2m\gamma+27m x_2$& No\\ \hline
			Xin et al. \cite{xin2020efficient} & Q               & C                                                         & C      &2   & Yes                     & $8m$ qubits   & $10m\delta+5m\beta$  & $4m\gamma +4mx_1+18mx_2$& Yes\\ \hline
			Xin et al. \cite{xin2020identity} & Q               & C                                                         & C       &2  & Yes                     & $8m+l$ qubits   & $15m\delta+(2m+l)\beta+l\delta$ & $12m\gamma+30mx_2$ &  Yes\\ \hline
			Ours & Q               & Q, C                                                         & C  &2       & No                      & $10m+2n$ qubits & $(23m+3n)\delta+ (3m+n)\beta$    & 0& No \\ \hline
	\end{tabular}}
	\vspace{.1cm}
	
	{\scriptsize
		Q=quantum, C = classical, $l>{}> m$ where $l$ is the total number of decoy particles, $u$=atmost number of signature recipient, $w$=A few safety metric acceptance thresholds utilized in the plan of \cite{chen2018public}, $L_2=$ size of the quantum digest's total qubits, $\delta$ is the computation cost for one basic measurement, $\gamma_q$ is the price of preparing one quantum digital digest, $\beta$ is the price of conversion from a single classical bit into qubit, $x_1$ is computation cost for one way hash function computation, $\gamma$ is the cost of computing one-way function, $x_2$ is computation cost required for one XOR operation, $x_3$ cost incurred during one evaluation of permutation function }
\end{table*}
From the Table \ref{tab:comp}, we can see that unlike \cite{chen2018public,xin2019identity,xin2020efficient,xin2020identity}, our proposed design does not use any oracle one-way functions.  The scheme of \cite{chen2018public} is inefficient as it employs long-term quantum memory and multiple rounds of quantum swap tests. We note that our scheme's computational expense is lower than that of \cite{chen2018public,xin2019identity}. Moreover, our proposed scheme is more communicationally efficient than \cite{xin2019identity,xin2020identity}.
The works of \cite{xin2020efficient,xin2020identity} require "entangled states" as quantum facilities and use Bell states. Nevertheless, our plan solely uses single-photon quantum resources, making it more feasible and practical than \cite{xin2020efficient,xin2020identity}.
Since the existing technology is not feasible for preparing these resources. Moreover,  \cite{xin2020efficient,xin2020identity}
is less efficient as it uses Bell states and other complicated oracle operators. Our scheme attains a lower quantum communication cost than the works of \cite{xin2019identity,xin2020identity,chen2018public}.
The messages to be signed in the design of \cite{xin2020efficient,chen2018public,xin2019identity,xin2020identity} must be a classical message.
The scheme of \cite{xin2020efficient,chen2018public,xin2019identity,xin2020identity} can't be used to sign a quantum message because the bits of classical message are used in the signature generation phase
and at some places in the verification phase as well. The fact that \cite{xin2020efficient,chen2018public,xin2019identity,xin2020identity} techniques are limited to signing classical communications is a drawback.  Since our proposed design has no such shortcoming, conventional and quantum messages can be signed.
The key space of
	\cite{xin2020efficient,chen2018public,xin2019identity,xin2020identity} is classical as ours.
 During our scheme's Signing step, the signer uses the conventional secret key $T_i$ to sign the message. In the Verification phase, the verifier Bob uses the classical secret key $T_u$, and SKG uses both $T_u$ and $T_i$ to verify the message. In \cite{xin2020efficient,chen2018public,xin2019identity,xin2020identity}, the signer's secret and public keys are classical bits.
Therefore, our proposed design has the upper hand over all other quantum IBS schemes.
{
	\section{Performance Analysis \label{per}}
	
	We simulated and implemented the proposed quantum IBS protocol on a quantum simulator that runs locally on a classical computer. In the next step, we also tested our design by implementing and running it on an actual back end. 
	\subsection{Implementation on a Quantum Machine \label{sec-imp-1}}
	\noindent We used the quantum simulator provided by the IBM Qiskit, an open-source software development kit, to test the correctness and feasibility of the quantum IBS. The hardware and software specifications are listed in Table \ref{tab:specs}.
	
	\begin{table}[t]
		\centering
		\caption{Software and hardware specification for IBM Qiskit quantum simulation}
		\label{tab:specs}
		\begin{tabular}{|l|l|}
			\hline
			Qiskit & v0.20.2         \\ \hline
			Qiskit Element & Qiskit Aer \\ \hline
			Simulator & QASM \\ \hline
			Python & v3.8.5          \\ \hline
			Local OS     & Linux Lite v5.2 \\ \hline
			Local Hardware & Intel Core i5, RAM 8GB \\ \hline
			
		\end{tabular}
	\end{table}
	
	We verified the correctness of our scheme by taking a particular instance. We take $ID_A={011}$ as the signer's identity. Let $\ket{P}=\ket{010}$ represent the quantum message that requires a signature, $T_i={010110}$ represent the common secret code that links the SKG and the signer, and let $T_U={100101}$ represents the common hidden key between the verifier and SKG. Also, we take $U=U(\pi/2,\pi,0)$.  Figure \ref{fig:quantum-circuit} displays this instance's elaborative circuitry. \ref{fig:quantum-circuit}. 
	\begin{figure*}[t]
		\centering
		\caption{Quantum circuits of an instance of the proposed quantum IBS}
		\fbox{\includegraphics[scale=0.35]{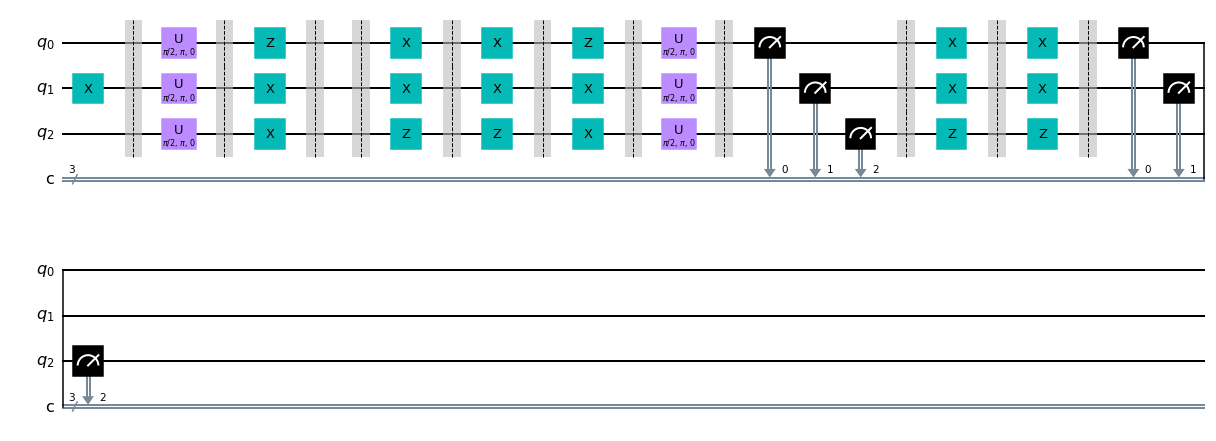}}
		\label{fig:quantum-circuit}
	\end{figure*}
	We also tested the success probability by running an instance of quantum IBS for $10000$ times. The results of this simulation experiment are depicted in Figure \ref{fig:exp}. To conclude, our simulation experiments on the quantum simulator verify the correctness and feasibility of the designed quantum IBS scheme.
	\begin{figure}[h!]
		\centering
		\caption{Success probability of an instance of the proposed quantum IBS}
		\fbox{\includegraphics[scale=0.45]{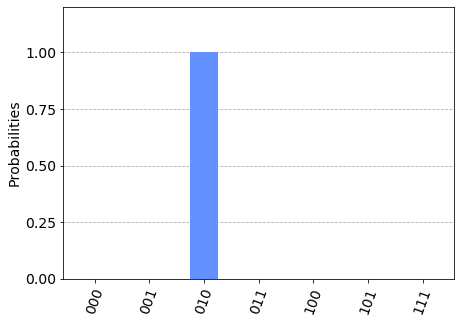}}
		\label{fig:exp}
	\end{figure}
	\subsection{Implementation on a Real Quantum Machine}
	We also implemented the design of quantum IBS put forward in this paper on a real quantum machine provided by IBM Quantum Experience. The hardware and software specifications used for performing simulation experiments are given in Table \ref{tab:specs-real}. We verified the correctness of our proposed protocol by running the same instance of quantum IBS as mentioned in section \ref{sec-imp-1}.

	\begin{table}[t]
		\centering
		\caption{Hardware specification of a real quantum machine used for simulation experiments}
		\label{tab:specs-real}
		\begin{tabular}{|l|l|}
			\hline
			Quantum Computer & IBM Q Lima v1.0.36 \\ \hline
			Qubits & 5 \\ \hline
			Processor type & Falcon r4T         \\ \hline
			CLOPS    & 2.7K \\ \hline
			Quantum Volume & 8 \\ \hline
			Software & Qiskit v0.20.2 \\\hline
			
		\end{tabular}
	\end{table}

	\begin{figure}[t]
		\centering
		\caption{Success probability of an instance of the proposed quantum IBS when simulated on a real quantum machine}
		\fbox{\includegraphics[scale=0.45]{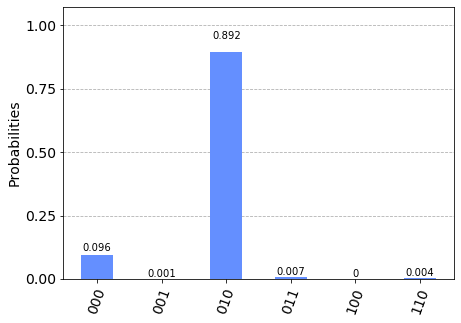}}
		\label{fig:exp1}
	\end{figure}
	
	To find the success probability of the quantum IBS, we ran the protocol for $1024$ times on a real quantum computer with hardware specifications mentioned in Table \ref{tab:specs-real}. Figure \ref{fig:exp1} summarizes the results of this simulation experiment. We now analyze the findings of the experiment. The success probability turns out to be 89.2\%. The error rate of 10.8 \% may be attributed to facts like quantum noise and loss of photons during transmission. For real-life applications, we may increase the number of photons that are being transmitted. In addition, we may use classical error-correction techniques and quantum repeater to avoid the aforementioned problems \cite{shi2022privacy}. In essence, our proposed design only uses single photons quantum resources. As experiments suggest, it is possible and feasible to implement it using the existing quantum back-end technologies.}
\section{Toy Example}

	$X\ket{0}=\ket{1}, X\ket{1}=\ket{0}, Z\ket{0}=\ket{0}, Z\ket{1}=-\ket{1}$
	\large{Encryption}
	\begin{flalign}
	E_k(\ket{P})&=\bigotimes_{i=1}^{3}{X^{k_{2i}}Z^{k_{2i-1}}} \ket{p_i} \\
	&=X^{k_2}Z^{k_1}\ket{p_1}\otimes X^{k_4}Z^{k_3}\ket{p_2} \otimes X^{k_6}Z^{k_5}\ket{p_3} \notag\\
	&=X\ket{p_1} \otimes X\ket{p_2} \otimes Z\ket{p_3} \notag\\
	&=X\ket{0} \otimes X\ket{1} \otimes Z\ket{0} \notag\\
	&=\ket{100} \notag
	\end{flalign}	
	\large{Decryption}: \\
	 $\ket{e}=\ket{100}$
	\begin{flalign}
	Dec(E_k\ket{p})&=\bigotimes_{i=1}^{3}{Z^{k_{2i-1}}X^{k_{2i}}} \ket{e_i}  \\
	&=Z^{k_1}X^{k_2}\ket{e_1}\otimes Z^{k_3}X^{k_4}\ket{e_2} \otimes Z^{k_5}X^{k_6}\ket{e_3} \notag\\	
	&=X\ket{1} \otimes X\ket{0} \otimes Z\ket{0} \notag\\
	&=\ket{010} \notag\\
	&=\ket{p} \notag
	\end{flalign}
\subsection{QIBS}
	\subsubsection{Initialization Phase}
	Let $ID_A={011}$ and $ID_B={100}$ denote the signer and the verifier identities, respectively. $\ket{P}=\ket{010}$ represent the quantum message that requires a signature. Let $T_i={010110}$ denote the secret key between the signer and the SKG, and let $T_U={100101}$ represent the mutual hidden key between the SKG and the person who signed it. Also, we take $U=U(\pi/2,\pi,0)
	= \begin{pmatrix}
	1/\sqrt{2} & -1/\sqrt{2} \\
	-1/\sqrt{2} & -1/\sqrt{2}
	\end{pmatrix} 
	=1/\sqrt{2}\begin{pmatrix}
	1 & -1 \\
	-1 & -1 
	\end{pmatrix}$. Note that
	$U\ket{0}=1/\sqrt{2}\begin{pmatrix}
	1 & -1 \\
	-1 & -1 
	\end{pmatrix}\begin{bmatrix}
	1 \\ 0
	\end{bmatrix}= 1/\sqrt{2}\begin{bmatrix}
	1 \\ -1
	\end{bmatrix} = \ket{-}$ and 
	$U\ket{1}=1/\sqrt{2}\begin{pmatrix}
	1 & -1 \\
	-1 & -1 
	\end{pmatrix}\begin{bmatrix}
	0 \\ 1
	\end{bmatrix}= 1/\sqrt{2}\begin{bmatrix}
	-1 \\ -1
	\end{bmatrix} = \ket{+}$ \\
	\subsubsection{Signing Phase} In the signing phase, the signer having identity $ID_A={011}$, operates $U^{\otimes 3 }$ (where $U=1/\sqrt{2}\begin{pmatrix}
	1 & -1 \\
	-1 & -1 
	\end{pmatrix}$) on the quantum message state $|P\rangle= \ket{010}$ to get $U^{\otimes 3}(|P\rangle)$. In the next step, the signer uses the quantum OTP encryption to encrypt $U^{\otimes 3}(|P\rangle)$ to get $\ket{S}$.
	
	\begin{align}\ket{S}&= E_{T_i}(U^{\otimes 3}(\ket{P})) \notag \\ 
	&=E_{T_i}(U^{\otimes 3}(\ket{010})) \notag \\
	&=E_{T_i}(U\ket{0} \otimes U\ket{1} \otimes U\ket{0}) \notag \\
	&=E_{T_i}(\ket{-} \otimes \ket{+} \otimes \ket{-}) \notag \\
	&=E_{T_i}(\ket{-+-}) \notag \\
	&=E_{T_i}[(\ket{0}-\ket{1}) \otimes (\ket{0}+\ket{1}) \otimes (\ket{0}-\ket{1}) ] \notag \\
	&=E_{T_i}[\ket{000}-\ket{100}+\ket{010}-\ket{110}\\
	&-\ket{001}+\ket{101}-\ket{011}+\ket{111}] 
	\end{align}
	To compute the final value of $\ket{S}$, we compute each of \\$E_{T_i}\ket{000}$, $E_{T_i}\ket{100}$, $E_{T_i}\ket{010}$, $E_{T_i}\ket{110}$,$E_{T_i}\ket{001}$, $E_{T_i}\ket{101}$, $E_{T_i}\ket{011}$, and $E_{T_i}\ket{111}$.
	\begin{align}
	&E_{T_i}\ket{000}=X\ket{0} \otimes X\ket{0} \otimes Z\ket{0} = \ket{110} \notag \\
	&E_{T_i}\ket{100}=X\ket{1} \otimes X\ket{0} \otimes Z\ket{0} = \ket{010} \notag \\
	&E_{T_i}\ket{010}=X\ket{0} \otimes X\ket{0} \otimes Z\ket{1} = \ket{100} \notag \\
	&E_{T_i}\ket{110}=X\ket{1} \otimes X\ket{1} \otimes Z\ket{0} = \ket{000} \notag \\
	&E_{T_i}\ket{001}=X\ket{0} \otimes X\ket{0} \otimes Z\ket{1} = -\ket{111} \notag \\
	&E_{T_i}\ket{101}=X\ket{1} \otimes X\ket{0} \otimes Z\ket{1} = -\ket{011} \notag \\
	&E_{T_i}\ket{011}=X\ket{0} \otimes X\ket{1} \otimes Z\ket{1} = -\ket{101} \notag \\
	&E_{T_i}\ket{111}=X\ket{1} \otimes X\ket{1} \otimes Z\ket{1} = -\ket{001} \notag \\
	\end{align}
	Putting the values back in the original expression ($A.1)$, we get the value of $\ket{S}$ as $\ket{S}= \frac{1}{2\sqrt{2}}[\ket{110}-\ket{010}+\ket{100}-\ket{000}+\ket{111}-\ket{011}+\ket{101}-\ket{001}]$
	
	\noindent Finally, the signer outputs the tuple $(| P\rangle, |S\rangle, |ID_A\rangle)$ as message-signature pair, where $\ket{P}=\ket{010}, \ket{S}= \frac{1}{2\sqrt{2}}[\ket{110}-\ket{010}+\ket{100}-\ket{000}+\ket{111}-\ket{011}+\ket{101}-\ket{001}],$ and  $\ket{ID_A}= \ket{011}$.
	
	\subsubsection{Verification Phase}
	
	\begin{description}

		\item[Step 1] Bob, the verifier, keeps $\ket{P}$ and encrypts $(\ket{S}, \ket{ID_A})$ with $T_u$.
		\begin{align*}
		E_{T_u}&=\bigotimes_{i=1}^{3}{X^{k_{2i}}Z^{k_{2i-1}}} \ket{y_i} \\ \notag
		&=X^{k_2}Z^{k_1}\ket{y_1}\otimes X^{k_4}Z^{k_3}\ket{y_2} \otimes X^{k_6}Z^{k_5}\ket{y_3} \notag\\
		&=Z\ket{y_1} \otimes X\ket{y_2} \otimes X\ket{y_3} \notag\\
		\end{align*}
		$$E_{T_u}\ket{110} =Z\ket{1} \otimes X\ket{1} \otimes X\ket{0} = -\ket{101}  $$
		$$E_{T_u}\ket{010} =Z\ket{0} \otimes X\ket{1} \otimes X\ket{0} = \ket{001} $$
		$$E_{T_u}\ket{100} =Z\ket{1} \otimes X\ket{0} \otimes X\ket{0} = -\ket{111} $$
		$$E_{T_u}\ket{000} =Z\ket{0} \otimes X\ket{0} \otimes X\ket{0} = \ket{011} $$
		$$E_{T_u}\ket{111} =Z\ket{1} \otimes X\ket{1} \otimes X\ket{1} = -\ket{100} $$
		$$E_{T_u}\ket{011} =Z\ket{0} \otimes X\ket{1} \otimes X\ket{1} = \ket{000} $$
		$$E_{T_u}\ket{101} =Z\ket{1} \otimes X\ket{0} \otimes X\ket{1} = -\ket{110} $$
		$$E_{T_u}\ket{001} =Z\ket{0} \otimes X\ket{0} \otimes X\ket{1} = \ket{010} $$
		{ $E_{T_u}\ket{S}= \frac{1}{2\sqrt{2}}[-\ket{110}-\ket{010}-\ket{100}-\ket{000}-\ket{111}-\ket{011}-\ket{101}-\ket{001}] $\\
		$E_{T_u}\ket{S}= -\frac{1}{2\sqrt{2}}[\ket{110}+\ket{010}+\ket{100}+\ket{000}+\ket{111}+\ket{011}+\ket{101}+\ket{001}]=\ket{T} $} \vspace{1mm}
		$$E_{T_u}\ket{ID_A}=E_{T_u}\ket{011}=\ket{000}=EID_A$$
		$E_{T_u}(\ket{S},\ket{ID_A})=(-\frac{1}{2\sqrt{2}}[\ket{110}+\ket{010}+\\\ket{100}+\ket{000}+\ket{111}+\ket{011}+\ket{101}+\ket{001}],\ket{000})$
		This is sent to SKG. For the sake of brevity, we call $E_{T_u}(\ket{S},\ket{ID_A})$ as $ \psi $.

		\item[Step 2:] SKG first decrypts $\psi$ using the secret key $T_u$ to get back $(|S\rangle, |ID_i\rangle)$. 

			\begin{align*}
		D_{T_u}(\ket{\psi}) &= (D_{T_u}\ket{T}, D_{T_u}\ket{EID_A}) \notag \\
		D_{T_u} &=\bigotimes_{i=1}^{3}{Z^{k_{2i-1}}X^{k_{2i}}} \ket{y_i} \notag\\
		&= Z\ket{y_1} \otimes X\ket{y_2} \otimes X\ket{y_3} \notag \\
		D_{T_u}(EID_A) &= D_{T_u}\ket{000}=\ket{011}=\ket{ID_A} \nonumber \\
		D_{T_u}(\ket{T})&=D_{T_u}[-\frac{1}{2\sqrt{2}}[\ket{110}+\ket{010}+\ket{100}+\ket{000}+\\&\ket{111}+\ket{011}+\ket{101}+\ket{001}],\ket{000}] \nonumber\\
		&=\frac{1}{2\sqrt{2}}[\ket{110}-\ket{010}+\ket{100}-\ket{000}+\\&\ket{111}-\ket{011}+\ket{101}-\ket{001}] \nonumber \\
		&=\ket{S}             
		\end{align*}

		\noindent Now SKG performs the following steps to retrieve $\ket{P}$

		\begin{flalign*}
		D_{T_i} &=U^{{\dagger}^{\otimes3}}\bigotimes_{i=1}^{3}{Z^{k_{2i-1}}X^{k_{2i}}} \notag \\
		\bigotimes_{i=1}^{3}{Z^{k_{2i-1}}X^{k_{2i}}} &= X \otimes Z \otimes Z \notag \\
		U &=1/\sqrt{2}\begin{pmatrix}
		1 & -1 \\
		-1 & -1 
		\end{pmatrix} \notag \\
		U^{\dagger} &=1/\sqrt{2}\begin{pmatrix}
		1 & -1 \\
		-1 & -1 
		\end{pmatrix}
		\end{flalign*}
  \begin{widetext}
  \begin{equation}
		U^{{\dagger}^{\otimes3}} = \frac{1}{2\sqrt{2}}\begin{pmatrix}
		1 & -1 & -1 & 1 & -1 & 1 & 1 & -1 \\
		-1 & -1 & 1 & 1 & 1 & 1  & -1 & -1 \\
		-1 & 1 & -1 & 1 & 1 & -1 & 1 & -1 \\
		1 & 1 & 1 & 1 & -1 & -1 & -1 & -1 \\
		-1 & 1 & 1 & -1 & -1 & 1 & 1 & -1 \\
		1 & 1 & -1 & -1 & 1 & 1 & -1 & -1 \\
		1 & -1 & 1 & -1 & 1 & -1 & 1 & -1 \\
		-1 & -1 & -1 & -1 & -1 & -1 & -1 & -1 \\
		\end{pmatrix}
 \end{equation}
\end{widetext}			
		
		$\bigotimes_{i=1}^{3}{Z^{k_{2i-1}}X^{k_{2i}}}\ket{S}= \frac{1}{2\sqrt{2}}[\ket{000}-\ket{100}+\ket{010}-\ket{110}\\-\ket{001}+\ket{101}-\ket{011}-\ket{111}] =\ket{v} (say)$
	
\begin{widetext}
  \begin{equation}
	{\small 	
		U^{{\dagger}^{\otimes3}}\ket{v} = \\\frac{1}{2\sqrt{2}} X \frac{1}{2\sqrt{2}}\begin{pmatrix}
		1 & -1 & -1 & 1 & -1 & 1 & 1 & -1 \\
		-1 & -1 & 1 & 1 & 1 & 1  & -1 & -1 \\
		-1 & 1 & -1 & 1 & 1 & -1 & 1 & -1 \\
		1 & 1 & 1 & 1 & -1 & -1 & -1 & -1 \\
		-1 & 1 & 1 & -1 & -1 & 1 & 1 & -1 \\
		1 & 1 & -1 & -1 & 1 & 1 & -1 & -1 \\
		1 & -1 & 1 & -1 & 1 & -1 & 1 & -1 \\
		-1 & -1 & -1 & -1 & -1 & -1 & -1 & -1 \\
		\end{pmatrix} \begin{bmatrix}
		1 \\
		-1 \\
		1 \\
		-1 \\
		-1 \\
		1 \\
		-1 \\
		1 
		\end{bmatrix}
		}
  \end{equation}
\end{widetext}	
		\begin{flalign*}
		&= \frac{1}{8}\begin{bmatrix}
		0 &0 &8 &0 &0 &0 &0 & 0 
		\end{bmatrix}^{t} \notag \\
		&= \begin{bmatrix}
		0 &0 &1 &0 &0 &0 &0 & 0 
		\end{bmatrix}^{t} = \ket{010}
		\end{flalign*}
		
		In the following, SKG encrypts $|P\rangle$ using $T_u$  and sends $|R\rangle =E_{T_u}(|P\rangle)$ to Bob.
		$E_{T_u}(\ket{P})= E_{T_u} \ket{010} =\ket{001} = \ket{R} $ \\ \\
		
		\item[Step 3] On receiving $|R\rangle$ from SKG, Bob decrypts it using the secret key $T_u$. That is, $ D_{T_u}(\ket{R})= Z\ket{r_1} \otimes X\ket{r_2} \otimes X\ket{r_3} = \ket{010}$. Note that $D_{T_u}(|R\rangle)$ is same as the $|P\rangle$ of the signature tuple $(| P\rangle, |S\rangle, |ID_i\rangle)$. Therefore, Bob accepts the signature as a valid signature.

	\end{description}

\section{Application of Our Proposed Quantum IBS For Secure Email}

Email has become a ubiquitous form of communication, simplifying the exchange of information globally.  However, its widespread usage exposes it to various security threats, including email spoofing and tampering. In this section, we explore the potential usage of IBS in mitigating email security risks, focusing on its ability to prevent spoofing and tampering. Email spoofing is an adversary's common technique to impersonate legitimate senders, deceiving recipients into believing that the email originates from a trusted source.  It can lead to various forms of cybercrime, including phishing attacks, malware distribution, and financial fraud.  Moreover, email tampering involves unauthorized modification of email content during transit, compromising its integrity and authenticity. 

Our proposed quantum identity-based signatures offer a robust solution to address the aforementioned security challenges associated with email communications.  Unlike traditional public key infrastructure (PKI), quantum IBS enables users to generate digital signatures using their identities, such as email addresses or user IDs.  It eliminates the need for a separate infrastructure to manage public keys, simplifying key management processes and enhancing scalability. Quantum IBS can prevent email spoofing by enabling senders to sign their messages using their identities.  When a sender sends an email, they generate a digital signature using their private key.  Recipients can verify the email's authenticity by validating the signature using the sender's public key, derived from their identity.  If the signature is valid, it confirms that the email was indeed sent by the claimed sender, mitigating the risk of impersonation and spoofing.  In addition to preventing spoofing, quantum IBS ensures the integrity of email messages by detecting any unauthorized modifications made to the content during transit.  When a sender signs an email using their private key, the signature encapsulates the entire message, including its content and attachments.  Any alteration to the message would invalidate the signature upon verification.  Thus, recipients can detect tampering attempts and reject emails with invalid signatures, preserving the integrity and authenticity of the communication.  While traditional cryptographic algorithms, such as RSA and ECC, are widely used to secure email communications, they are vulnerable to quantum attacks due to the advent of quantum computing.  On the other hand, our proposed quantum IBS is immune to the threat of quantum computers.  To summarize, quantum IBS presents itself as a robust solution for ensuring that email messages are sent and received securely in the quantum world. Refer to Figure \ref{fig:ibs-email} for a schematic diagram. 
\begin{figure}
    \centering
    \includegraphics[scale=0.12]{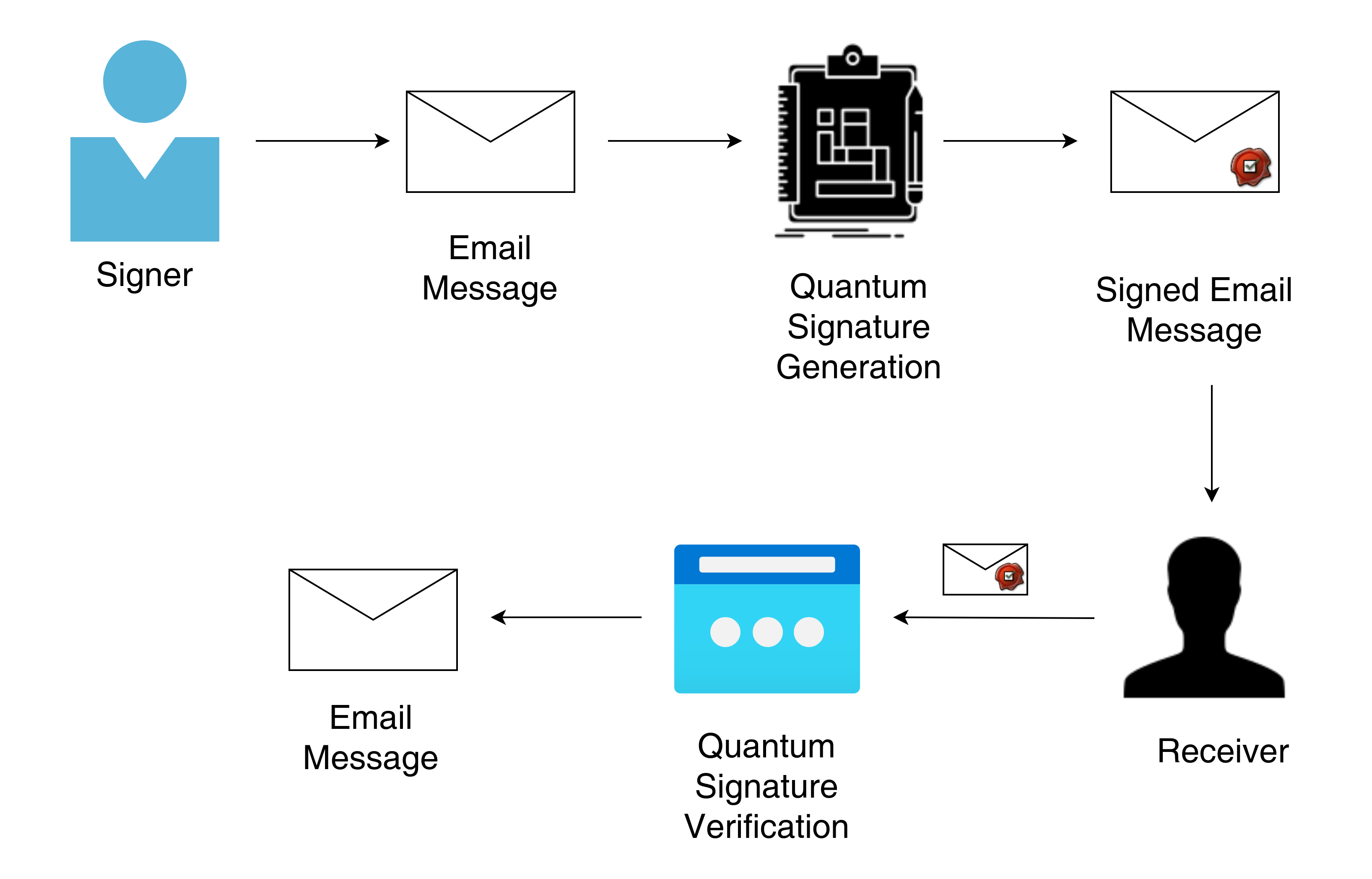}
    \caption{Application of quantum IBS in secure email}
    \label{fig:ibs-email}
\end{figure}

\section{Conclusion}
In this work, we used the quantum analogue of a one-time pad to design an information-theoretically secure quantum IBS. Furthermore, because our protocol's security is based on a fundamental aspect of quantum mechanics, it offers lasting safety compared to conventional and post-quantum IBS systems. Our proposed quantum IBS will satisfy unforgeability and undeniability if the encryption is theoretically secure. The unforgeability property ensures that no other party can do the signature on behalf of a signer. In contrast, undeniability ensures that a signer can not deny the signature generation if she did it. The suggested plan achieves long-term security and resists quantum attacks. Our scheme is more practical than the existing works since it uses only simple measurement operators and single-photon quantum resources. Furthermore, the suggested approach has a lower communication and computing overhead than the current quantum IBS systems \cite{chen2018public,xin2019identity,xin2020identity}. The Jupyter Notebook for the execution of the toy example is given in the Appendix. In addition, the application of our IBS in secure email communication is provided.

\noindent {\large \textbf{Statements and Declarations} } 

\noindent {\bf Competing Interests:} The authors declare no conflicts of interest.

\noindent {\bf Data Availability Statement:}
{Data will be available with the corresponding author upon reasonable request.}

\appendix
{
\includepdf[pages=-]{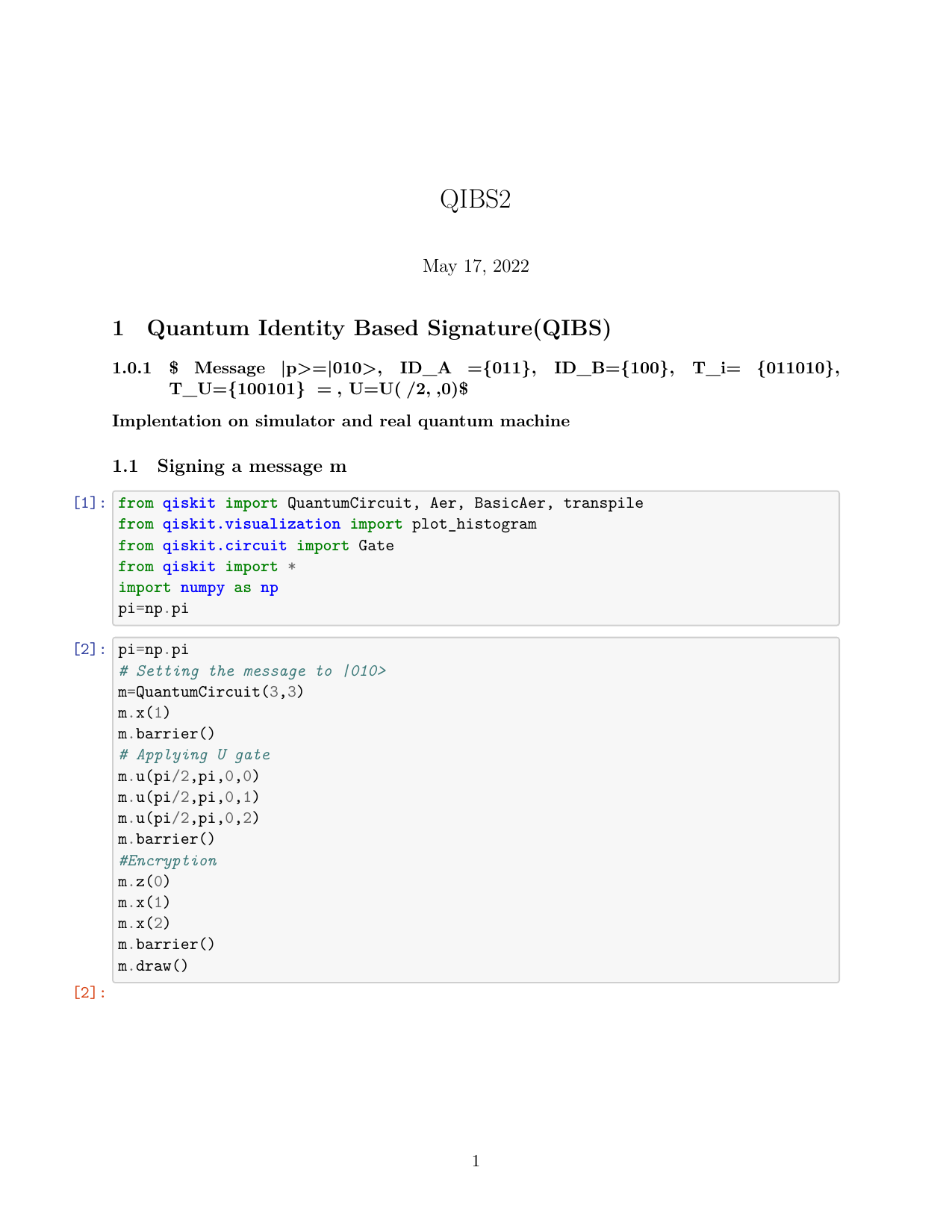}
}

\end{document}